\documentclass[runningheads]{llncs}

\usepackage[T1]{fontenc}
\usepackage{graphicx}
\usepackage{amsmath,amssymb}
\usepackage{url}
\usepackage{tikz}
\usetikzlibrary{arrows.meta,positioning,calc}
\usepackage{xcolor}
\usepackage{booktabs}
\usepackage{enumitem}
\usepackage[labelformat=simple]{subcaption}

\definecolor{copenblue}{RGB}{0,102,204}

\newcommand{\E}{\mathbb{E}}

\newcommand{\Z}{\mathcal{Z}}
\newcommand{\cX}{\mathcal{X}}
\newcommand{\cH}{\mathcal{H}}

\begin{document}

\title{Rate-Distortion Function for Encrypted Traffic Side-Channel Defense}

\titlerunning{Rate-Distortion Function for Encrypted Traffic Defense}

\author{Guangjie Liu\inst{1} \and
Guang Cheng\inst{2} \and
Weiwei Liu\inst{3} \and
Yutong Wang\inst{1}}

\authorrunning{G. Liu et al.}

\institute{
  School of Electronics and Information Engineering,
  Nanjing University of Information Science and Technology,
  Nanjing 210044, China\\
  \email{gjieliu@gmail.com}
\and
  School of Cyber Science and Engineering,
  Southeast University, Nanjing 211189, China\\
  \email{chengguang@seu.edu.cn}
\and
  School of Automation,
  Nanjing University of Science and Technology, Nanjing 210094, China \\
  \email{lwwnjust@njust.edu.cn}
}

\maketitle

\begin{abstract}
Parameter selection for encrypted traffic defense has long relied on empirical
tuning, yet the fundamental question---\emph{given a QoS cost budget $D$, how
low can the leakage rate go under sustained observation?}---lacks a provable,
computable baseline.  Taking the semantic label sequence $X^n$ as the source,
the defended feature sequence $Y^n$ as the observation, and Wasserstein-1
distance as the defense cost, we define the \emph{side-channel rate-distortion
function} $R^{\mathrm{sc}}(D)$ within the stationary memoryless defense class
$\Theta_{\mathrm{iid}}$ and provide its complete characterization.  We prove
that $R^{\mathrm{sc}}(D)$ is monotone decreasing, convex, and continuous, with
exact endpoints; the optimal defense has an exponential-tilting (Boltzmann)
structure governed by KKT conditions; and the curve constitutes the exact
Pareto frontier within $\Theta_{\mathrm{iid}}$.  For binary equal-prior tasks,
$D_{\max} = \tfrac{1}{2}W_1(P_0,P_1)$ via Kantorovich--Rubinstein duality.  On
real-world website-fingerprinting defenses, the framework locates Front
($\Delta_{\mathrm{gap}}{=}0.028$\,bits), WTF-PAD ($0.034$\,bits), and
TrafficSliver ($0.124$\,bits) above the theoretical curve, quantifying their
suboptimality gaps.

\keywords{Encrypted traffic side channel \and Rate-distortion theory \and
Optimal transport \and Wasserstein distance \and Website fingerprinting}
\end{abstract}

\section{Introduction}
\label{sec:intro}

Encrypted network traffic carries rich metadata---packet lengths, inter-packet
timings, and flow directions---that leaks application semantics even when
payload is hidden.  Website fingerprinting (WF) attacks exploit this channel
with alarming accuracy: closed-world recognition of Tor-browsed sites reaches
91--96\%~\cite{wang2014effective,sirinam2018deep}, application identification
exceeds 90\%~\cite{shen2021graphdapp}, and real-gateway accuracy remains above
93\% at very low base rates~\cite{mei2025high}.  Modern defenses (Tamaraw,
WTF-PAD~\cite{juarez2016wtfpad}, Front~\cite{gong2020zero},
TrafficSliver~\cite{delacadena2020trafficsliver},
NetShaper~\cite{sabzi2024netshaper}) trade bandwidth and latency for privacy,
but parameter selection is empirical and protection strength is measured by
adversarial accuracy.

\textbf{The missing baseline.}  A systematic re-evaluation of nine prominent
defenses~\cite{mathews2023sok} concluded that the absence of formal,
provable guarantees is the root cause of defenses being repeatedly broken.
Liu~et~al.~\cite{liu2026inevitability} gave an information-theoretic
diagnosis: side-channel leakage is \emph{inevitable}---for any
first-order distinguishable application pair, $I(X;Z)>0$ is a structural
consequence regardless of the defense.  Yet inevitability only answers whether
leakage is zero.  The central quantitative question remains open:
\begin{quote}
  \emph{Given QoS cost budget $D$, how low can the leakage rate
  go under sustained observation?}
\end{quote}
No existing tool---differential privacy~\cite{dwork2006distributed},
application-specific bandwidth
bounds~\cite{cai2014systematic},
or information-theoretic leakage
metrics~\cite{serjantov2002information,cherubin2017bayes}---provides a
computable answer to this general question.

\textbf{Our approach.}  This paper answers the question by casting it as
a \emph{dual of Shannon rate-distortion}: instead of minimizing description
rate subject to a distortion constraint, we minimize leakage rate subject to a
defense cost constraint.  The mathematical skeleton is identical; only the
roles of rate and distortion are swapped.  We use Wasserstein-1 ($W_1$)
distance as the defense cost measure because (i)~it respects the geometry of
the feature space, (ii)~the $W_1$ ball is convex, enabling convex optimization,
and (iii)~the Kantorovich--Rubinstein duality couples $W_1$ naturally to
$L$-Lipschitz statistics, providing a common language for attacker and
defender.

\textbf{Contributions.}  We define the side-channel rate-distortion function
\begin{equation}\label{eq:rsc}
  R^{\mathrm{sc}}(D) = \min_{\{Q_x\}:\,\sum_x p_X(x)W_1(Q_x,P_x)\le D}
  I(X;Y),
\end{equation}
and establish the following results.
\begin{enumerate}[label=(\arabic*),topsep=2pt,itemsep=1pt]
  \item \textbf{Properties} (Theorem~\ref{thm:properties}): $R^{\mathrm{sc}}(D)$
    is monotone decreasing, convex, and continuous with exact endpoints
    $R^{\mathrm{sc}}(0)=I(X;Z)$ and $R^{\mathrm{sc}}(D_{\max})=0$.
  \item \textbf{Optimal defense structure} (Theorem~\ref{thm:kkt}): The
    optimal defense has an exponential-tilting form
    $Q_x^*(y)\propto \bar{Q}^*(y)\exp(-\lambda^*\phi_x^*(y))$; the KKT
    multiplier $\lambda^*$ quantifies the marginal value of defense resources.
  \item \textbf{Pareto frontier} (Theorem~\ref{thm:pareto}): $(D, R^{\mathrm{sc}}(D))$
    is the exact achievable boundary within $\Theta_{\mathrm{iid}}$.
  \item \textbf{Critical cost formula} (Proposition~\ref{prop:dmax}):
    For binary equal-prior tasks,
    $D_{\max} = \tfrac{1}{2}W_1(P_0,P_1)$, bridging the zero-leakage
    threshold directly to the statistical separability of the two
    application distributions.
\end{enumerate}

\section{Related Work}
\label{sec:related}
\begin{sloppypar}

\textbf{Traffic analysis attacks.}
Website fingerprinting exploits packet metadata (lengths, timings, directions)
to identify encrypted communication.  Wang~et~al.~\cite{wang2014effective}
demonstrated 85\% TPR at 0.6\% FPR in a large open-world setting;
k-fingerprinting~\cite{hayes2016kfingerprinting} pushed this to 0.02\% FPR
across 100{,}000 pages.  Deep Fingerprinting~\cite{sirinam2018deep} achieves
98\% closed-world accuracy on Tor; ET-BERT~\cite{lin2022etbert} shows
cross-protocol distinguishability via pre-trained representations.
GraphDApp~\cite{shen2021graphdapp} exceeds 90\% accuracy in application
identification; real-gateway evaluations confirm accuracy above 93\% at very
low base rates~\cite{mei2025high}.  Siby~et~al.~\cite{siby2023quic} show
that even 10\%-sampled QUIC traffic yields recognition far above chance.
The consistent cross-scenario finding is that different applications produce
statistically distinguishable feature distributions $\{P_x\}$, making
$I(X;Z)>0$ a structural fact and the fundamental threat model well-founded.

\textbf{Traffic defenses and the formal-guarantee gap.}
Early defenses established baseline tradeoffs: BuFLO~\cite{dyer2012peek}
eliminates distinguishing features at over 100\% bandwidth overhead;
Tamaraw~\cite{cai2014systematic} derives the first scenario-specific bandwidth
lower bound for $\varepsilon$-secure defenses; Walkie-Talkie~\cite{wang2017walkie} reduces overhead
to ${\approx}31\%$ via a half-duplex schedule.  WTF-PAD~\cite{juarez2016wtfpad}
applies adaptive inter-burst padding at under 60\% overhead, but was
subsequently broken by Deep Fingerprinting~\cite{sirinam2018deep}.
Front~\cite{gong2020zero} achieves zero latency via Rayleigh-distributed
dummy injection; TrafficSliver~\cite{delacadena2020trafficsliver} splits flows
across guard nodes to reduce the per-node observable;
NetShaper~\cite{sabzi2024netshaper} frames side-channel mitigation as
$(\varepsilon,\delta)$-differential privacy and provides a quantified
privacy-overhead curve---but DP parameters accumulate linearly under
repeated composition, and the framework does not address the rate-distortion
question of how low leakage can go for a given budget $D$.
Palette~\cite{shen2024palette} clusters traffic patterns and normalizes to
a common template, reducing attack accuracy by ${\approx}74\%$ on average,
yet its guarantees depend on cluster radius rather than a universal bound.
Mathews~et~al.~\cite{mathews2023sok} re-evaluated nine defenses and found
that most fail under adaptive attacks; they explicitly attribute this to
the lack of formal, provable guarantees---the precise gap our work addresses.

\textbf{Information-theoretic leakage metrics.}
Serjantov and Danezis~\cite{serjantov2002information} introduced mutual
information as an anonymity metric; Chatzikokolakis~et~al.\ extended this to
a unified $g$-leakage framework~\cite{chatzikokolakis2007anonymity}.
Cherubin~\cite{cherubin2017bayes} connected Bayes error to mutual information
via Fano-type inequalities, placing attack accuracy and information leakage
in a common framework.  Liu~et~al.~\cite{liu2026inevitability} proved the
\emph{inevitability} of side-channel leakage: for any first-order
distinguishable application pair, $I(X;Y)>0$ holds for every defense---a
structural result that motivates asking not whether leakage is zero, but
how small it can be made.

\textbf{Rate-distortion theory and optimal transport.}
Shannon~\cite{shannon1959coding} and Berger~\cite{berger1971rate} established
rate-distortion theory; the classical problem minimizes description rate
subject to a distortion bound.  Yamamoto~\cite{yamamoto1997rate} and
Merhav--Shamai~\cite{merhav2007information} studied rate-distortion under
secrecy constraints (state masking), closest in spirit to our work.
The Blahut--Arimoto algorithm~\cite{blahut1972computation} provides a
convergent iterative solver for classical rate-distortion; our SLSQP-based
approach extends this to the composite $W_1$ distortion.
Villani~\cite{villani2009optimal} established the Kantorovich--Rubinstein
duality underlying our cost definition; Wasserstein distance has been applied
to differential privacy~\cite{wassersteindp2024} but not to the
rate-distortion characterization of traffic defense.  The present work
bridges Shannon rate-distortion~\cite{shannon1959coding,berger1971rate}
with optimal transport~\cite{villani2009optimal} and instantiates the
combined framework on encrypted traffic for the first time.

\textbf{Positioning.}  Three specific gaps motivate this work.
(1)~\emph{No cost-leakage tradeoff curve exists}: attack literature confirms
$I(X;Z)>0$; defense literature accumulates empirical tradeoff curves; but
neither provides the theoretical lower bound on leakage as a function of $D$.
(2)~\emph{Optimal defense structure is unknown}: existing defenses are
heuristic and do not align perturbations with any optimality criterion.
(3)~\emph{No quantitative bridge to the theoretical limit}: Mathews
et~al.~\cite{mathews2023sok} identify the gap but provide no metric for
measuring distance from the optimum.  We fill all three gaps.

\end{sloppypar}

\section{Problem Formulation}
\label{sec:model}

\subsection{Model}

\begin{definition}[Semantic source]
$\{X_i\}_{i\ge 1}$ is i.i.d.\ with $X_i\in\cX=\{1,\ldots,K\}$ and prior
$p_X$, where $\min_x p_X(x)>0$.
\end{definition}

\begin{definition}[Feature space and original distributions]
The packet feature space $\Z$ is finite with metric $d:\Z\times\Z\to[0,1]$.
The original conditional distribution $P_x = P_{Z|X=x}$ is the
feature distribution of application $x$ in the absence of any defense.
\end{definition}

A \emph{defense strategy} $\theta = \{Q_x\}_{x\in\cX}$ specifies a target
defended distribution: $Y|X=x \sim Q_x$.  We study the \emph{stationary
memoryless defense class}
\[
  \Theta_{\mathrm{iid}} \triangleq
  \bigl\{\theta = \{Q_x\} : Q_x \in \mathcal{P}(\Z),\
  Y_i|X_i=x \overset{\mathrm{iid}}{\sim} Q_x\ \text{across}\ i \bigr\}.
\]
This class covers deterministic traffic shaping and random padding; defenses
with flow-level shared randomness or cross-packet memory are discussed in
Sect.~\ref{sec:discussion}.

\subsection{Defense Cost and Leakage Rate}

\begin{definition}[Defense cost]
\label{def:cost}
\[
  D(\theta) \triangleq \sum_{x\in\cX} p_X(x)\,W_1(Q_x, P_x),
\]
where $W_1(Q_x,P_x) = \min_{\gamma\in\Pi(Q_x,P_x)}\sum_{y,z}d(y,z)\gamma(y,z)$
is the Wasserstein-1 distance.  $D(\theta)$ is the average cost of
``transporting'' each application's feature distribution from $P_x$ to $Q_x$.
\end{definition}

We choose $W_1$ over KL divergence for three reasons:
(i)~$W_1$ respects the geometry of $\Z$; (ii)~the $W_1$ ball $\{Q:W_1(Q,P_x)\le r\}$
is convex, so the feasible set of~\eqref{eq:rsc} is convex; and
(iii)~the Kantorovich--Rubinstein duality $W_1(Q,P)=\sup_{h\in\cH_1}|\E_Q h - \E_P h|$
couples the defense cost directly to $L$-Lipschitz statistics used by the
attacker.

\begin{definition}[Leakage rate]
$R(\theta) \triangleq \lim_{n\to\infty}\tfrac{1}{n}I(X^n;Y^n)$.
\end{definition}

Using a single-sample $X$ as the semantic variable leads to the trivial
bound $\frac{1}{n}I(X;Y^n)\le \frac{\log K}{n}\to 0$; using the
source sequence $X^n$ yields a meaningful per-step limit.

\begin{lemma}[Single-letterization]
\label{lem:single-letter}
For $\theta\in\Theta_{\mathrm{iid}}$, the limit exists and equals
$R(\theta) = I(X;Y)$, where $(X,Y)\sim p_X(x)Q_x(y)$.
\end{lemma}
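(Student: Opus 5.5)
The plan is to show that, under $\Theta_{\mathrm{iid}}$, the pairs $(X_i,Y_i)$ are i.i.d.\ with common law $p_X(x)Q_x(y)$, so that $I(X^n;Y^n)=nI(X;Y)$ exactly for every $n$. The limit then exists trivially and equals $I(X;Y)$.

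\textbf{Step 1: the joint law of $(X^n,Y^n)$ factorizes.} The source is i.i.d.\ with prior $p_X$. The definition of $\Theta_{\mathrm{iid}}$ says that, given $X^n=x^n$, the $Y_i$ are independent with $Y_i\sim Q_{x_i}$. I will read this memoryless condition precisely as $P(y^n\mid x^n)=\prod_i Q_{x_i}(y_i)$: each $Y_i$ depends only on its own $X_i$, with no cross-packet memory and no shared randomness. Under that reading,
\[
  P_{X^nY^n}(x^n,y^n)=\prod_{i=1}^n p_X(x_i)\,Q_{x_i}(y_i).
\]
Hence the pairs $(X_i,Y_i)$ are i.i.d.\ copies of $(X,Y)\sim p_X(x)Q_x(y)$.

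\textbf{Step 2: additivity of mutual information.} For independent pairs, $H(Y^n)=\sum_i H(Y_i)$, because the marginals $Y_i$ are independent. Likewise $H(Y^n\mid X^n)=\sum_i H(Y_i\mid X_i)$, because the channel is a product. Therefore
\[
  I(X^n;Y^n)=\sum_{i=1}^n I(X_i;Y_i)=n\,I(X;Y).
\]
Since $\mathcal{Z}$ and $\mathcal{X}$ are finite, all entropies are finite and bounded by $\log|\mathcal{Z}|$ per letter, so none of these manipulations is delicate.

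\textbf{Step 3: conclude.} It follows that $\tfrac1n I(X^n;Y^n)=I(X;Y)$ for every $n$. The sequence is constant, so the limit exists and $R(\theta)=I(X;Y)$.

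\textbf{Main obstacle.} The only real subtlety is justifying the product form of $P(y^n\mid x^n)$ from the wording ``$Y_i|X_i=x\overset{\mathrm{iid}}{\sim}Q_x$ across $i$.'' That phrase alone fixes only the per-letter conditionals, not the joint conditional. If a reader allowed dependence across $i$ while keeping those per-letter marginals, the conclusion could fail, since shared randomness can change $I(X^n;Y^n)$. I would therefore state explicitly that membership in $\Theta_{\mathrm{iid}}$ means conditional independence of the $Y_i$ given $X^n$, with $Y_i$ depending on $X^n$ only through $X_i$. This is consistent with the paper deferring shared-randomness and memory defenses to Sect.~\ref{sec:discussion}.

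Under that reading, the inequality $I(X^n;Y^n)\le\sum_i I(X_i;Y_i)$ for memoryless channels and $\ge$ for independent inputs together give equality. Either this route or the direct product computation in Step 2 suffices.
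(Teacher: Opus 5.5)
Your proof is correct and takes the same route as the paper: the pairs $(X_i,Y_i)$ are i.i.d., so $I(X^n;Y^n)=\sum_i I(X_i;Y_i)=n\,I(X;Y)$ for every $n$, and the normalized sequence is constant. Your explicit reading of $\Theta_{\mathrm{iid}}$ as the product conditional $\prod_i Q_{x_i}(y_i)$ makes precise the assumption that the paper's one-line proof takes for granted.
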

\begin{proof}
  Under $\Theta_{\mathrm{iid}}$, $(X_i,Y_i)$ are i.i.d., so
  $I(X^n;Y^n) = \sum_{i=1}^n I(X_i;Y_i) = n\,I(X;Y)$.  Dividing by $n$
  gives the result.
\end{proof}

\subsection{Side-Channel Rate-Distortion Function}

\begin{definition}
\label{def:rd}
For $D\ge 0$:
\begin{equation}
  R^{\mathrm{sc}}(D) \triangleq
  \min_{\{Q_x\}:\,\sum_x p_X(x)W_1(Q_x,P_x)\le D} I(X;Y).
\end{equation}
\end{definition}

Since $\Z$ is finite, this is a finite-dimensional convex program: $I(X;Y)$
is convex in $\{Q_x\}$ (for fixed $p_X$), and the constraint function is
convex.  The minimum is attained (compact feasible set, continuous objective).

\section{Main Results}
\label{sec:results}

\subsection{Properties of $R^{\mathrm{sc}}(D)$}

\begin{theorem}[Basic properties]
\label{thm:properties}
$R^{\mathrm{sc}}(D)$ satisfies:
\begin{enumerate}[label=(\roman*),topsep=2pt,itemsep=1pt]
  \item \textup{(Monotone)} $D_1<D_2 \Rightarrow R^{\mathrm{sc}}(D_1)\ge R^{\mathrm{sc}}(D_2)$.
  \item \textup{(Convex)} $R^{\mathrm{sc}}(\lambda D_1+(1-\lambda)D_2)\le
        \lambda R^{\mathrm{sc}}(D_1)+(1-\lambda)R^{\mathrm{sc}}(D_2)$.
  \item \textup{(Continuous)} $R^{\mathrm{sc}}$ is continuous on $[0,D_{\max}]$
        and zero for $D>D_{\max}$.
  \item \textup{(Endpoints)} $R^{\mathrm{sc}}(0) = I(X;Z)$ and
        $R^{\mathrm{sc}}(D)=0 \Leftrightarrow D\ge D_{\max}$, where
        \[
          D_{\max} \triangleq \inf_{Q\in\mathcal{P}(\Z)}
          \sum_x p_X(x)\,W_1(Q,P_x).
        \]
\end{enumerate}
\end{theorem}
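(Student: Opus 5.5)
Monotonicity is immediate from nestedness of the feasible sets: if $D_1<D_2$, every $\{Q_x\}$ with $\sum_x p_X(x)W_1(Q_x,P_x)\le D_1$ is also feasible at $D_2$, so the minimum at $D_2$ can only be smaller. For convexity I take minimizers $\theta^{(1)}=\{Q^{(1)}_x\}$ and $\theta^{(2)}=\{Q^{(2)}_x\}$ at $D_1,D_2$ (they exist by compactness, as noted after Definition~\ref{def:rd}) and form the mixture $Q^\lambda_x=\lambda Q^{(1)}_x+(1-\lambda)Q^{(2)}_x$. The cost side uses joint convexity of $W_1$ in its two arguments, applied with $P_x=\lambda P_x+(1-\lambda)P_x$. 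Concretely, the mixture $\lambda\gamma^{(1)}_x+(1-\lambda)\gamma^{(2)}_x$ of optimal couplings is a coupling of $(Q^\lambda_x,P_x)$. This gives $W_1(Q^\lambda_x,P_x)\le\lambda W_1(Q^{(1)}_x,P_x)+(1-\lambda)W_1(Q^{(2)}_x,P_x)$, so $\theta^\lambda$ is feasible at $\lambda D_1+(1-\lambda)D_2$. The leakage side uses convexity of $I(X;Y)$ in the channel $\{Q_x\}$ for fixed $p_X$, which is the standard log-sum-inequality fact. Hence $R^{\mathrm{sc}}(\lambda D_1+(1-\lambda)D_2)\le I(\theta^\lambda)\le\lambda R^{\mathrm{sc}}(D_1)+(1-\lambda)R^{\mathrm{sc}}(D_2)$.

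For the endpoints: at $D=0$, $W_1(Q_x,P_x)=0$ with $p_X(x)>0$ forces $Q_x=P_x$ for every $x$, because $d$ is a metric and so $W_1$ is a metric on $\mathcal P(\Z)$. The feasible set is therefore the single point $\{P_x\}$, and $R^{\mathrm{sc}}(0)=I(X;Z)$. For the zero set, $I(X;Y)=0$ holds iff $Y$ is independent of $X$, i.e.\ iff $Q_x=Q$ for a common $Q$, since $p_X$ has full support. Thus $R^{\mathrm{sc}}(D)=0$ iff some common $Q$ satisfies $\sum_x p_X(x)W_1(Q,P_x)\le D$. The infimum defining $D_{\max}$ is attained: $Q\mapsto\sum_x p_X(x)W_1(Q,P_x)$ is continuous on the compact simplex $\mathcal P(\Z)$, since $W_1$ is Lipschitz in total variation with constant $\max d\le 1$. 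So such a $Q$ exists iff $D\ge D_{\max}$, which proves (iv) and the zero part of (iii).

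Continuity on $[0,D_{\max}]$ is the step needing the most care. A convex, finite, nonincreasing function on $[0,\infty)$ is automatically continuous on the open interval $(0,\infty)$, so only right-continuity at $D=0$ remains. Convexity gives $\limsup_{D\downarrow0}R^{\mathrm{sc}}(D)\le R^{\mathrm{sc}}(0)$, and the reverse bound is where compactness enters. I take $D_k\downarrow0$ with minimizers $\theta^{(k)}$ and pass to a convergent subsequence $\theta^{(k)}\to\theta^\ast$ in the compact product of simplices. Continuity of the cost forces $D(\theta^\ast)=0$, hence $\theta^\ast=\{P_x\}$. Continuity of $I(X;Y)$ in $\{Q_x\}$ on finite alphabets then gives $\liminf_k R^{\mathrm{sc}}(D_k)=I(X;Z)$. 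Equivalently, the whole of (iii) follows from the lower semicontinuity and monotonicity of the minimum-value function of a continuous objective over a compact, continuously parametrized constraint set. Continuity at $D_{\max}$ from the left holds because $R^{\mathrm{sc}}\equiv0$ to the right and $R^{\mathrm{sc}}$ is continuous on the open interval.
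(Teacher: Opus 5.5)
Your proof is correct and follows the paper's argument essentially step for step: nested feasible sets for monotonicity, mixing optimal channels using convexity of $W_1(\cdot,P_x)$ and of $I(X;Y)$ in the channel, $W_1$ being a metric at $D=0$, independence forcing a common $Q$ for the zero set, and convexity plus compactness for continuity. The only differences are minor. You get continuity at $D_{\max}$ by noting that $R^{\mathrm{sc}}$ is convex on all of $[0,\infty)$, so $D_{\max}$ is an interior point, whereas the paper uses a halving argument $L^-\le\tfrac12 L^-$; and you get right-continuity at $0$ by a compactness/subsequence argument instead of the paper's direct bound $W_1(Q_x^D,P_x)\le D/\min_x p_X(x)$.
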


\begin{proof}
\textit{(i) Monotonicity.}
If $D_1<D_2$, the feasible set $\mathcal{F}(D_1)\subseteq\mathcal{F}(D_2)$,
so minimizing over a larger set cannot increase the value.

\textit{(ii) Convexity.}
Let $\{Q_x^{(j)}\}$ be optimal at level $D_j$ for $j=1,2$, and set
$Q_x^{(\lambda)}=\lambda Q_x^{(1)}+(1{-}\lambda)Q_x^{(2)}$.
By convexity of $W_1(\cdot,P_x)$ (linear program value in first argument):
$\sum_x p_x W_1(Q_x^{(\lambda)},P_x)\le\lambda D_1+(1{-}\lambda)D_2$.
By convexity of $I(X;Y)$ in the channel $\{Q_x\}$ at fixed $p_X$~\cite{cover2006elements}:
$I_\lambda(X;Y)\le\lambda R^{\mathrm{sc}}(D_1)+(1{-}\lambda)R^{\mathrm{sc}}(D_2)$.
Hence $R^{\mathrm{sc}}(\lambda D_1+(1{-}\lambda)D_2)\le\lambda R^{\mathrm{sc}}(D_1)+(1{-}\lambda)R^{\mathrm{sc}}(D_2)$.

\textit{(iii) Continuity.}
A finite-valued convex function on an open interval is continuous there, so
continuity holds on $(0,D_{\max})$.  At $D=0$: monotonicity gives
$\lim_{D\downarrow 0}R^{\mathrm{sc}}(D)\le R^{\mathrm{sc}}(0)$.  For the reverse,
any optimal $\{Q_x^D\}$ satisfies $W_1(Q_x^D,P_x)\le D/(\min_x p_x)\to 0$,
and since $\Z$ is finite, $W_1\to 0$ implies $Q_x^D\to P_x$, so
$I_D(X;Y)\to I(X;Z)=R^{\mathrm{sc}}(0)$.  At $D_{\max}$: if the left limit
$L^->\!0$, convexity applied to $D_1<D_{\max}$ and $D_2=D_{\max}$ gives
$R^{\mathrm{sc}}(\tfrac{D_1+D_{\max}}{2})\le\tfrac{1}{2}R^{\mathrm{sc}}(D_1)$;
letting $D_1\uparrow D_{\max}$ yields $L^-\le\frac{1}{2}L^-$, a contradiction.

\textit{(iv) Left endpoint.}
$D=0$ forces $W_1(Q_x,P_x)=0$ for every $x$.  On finite $\Z$, this means
$Q_x=P_x$, so $Y|X=x\sim P_x$ and $I(X;Y)=I(X;Z)$.

\textit{(v) Zero-leakage condition.}
($\Leftarrow$) $\mathcal{P}(\Z)$ is compact and $Q\mapsto\sum_x p_x W_1(Q,P_x)$
is continuous, so $D_{\max}$ is attained at some $Q^\dagger$.
Setting $Q_x\equiv Q^\dagger$ for all $x$ gives cost $D_{\max}\le D$
and $I(X;Y)=0$ (since $Y\perp X$).
($\Rightarrow$) $I(X;Y)=0$ implies $D_{\mathrm{KL}}(Q_x^*\|p_Y)=0$
for all $x$ with $p_x>0$, so $Q_x^*=p_Y$ for all such $x$.
Thus $D\ge\sum_x p_x W_1(p_Y,P_x)\ge D_{\max}$.
\end{proof}

Properties (i)--(ii) encode diminishing returns: early budget buys
substantial leakage reduction, while zero leakage requires disproportionate cost.

\begin{corollary}[Marginal value of defense]
\label{cor:subgrad}
For $D\in(0,D_{\max})$, let $\lambda^*(D)$ be the optimal dual multiplier.
Then $-\lambda^*(D)\in\partial R^{\mathrm{sc}}(D)$: $\lambda^*$ is the
marginal leakage reduction per unit of defense cost.  Wherever
$R^{\mathrm{sc}}$ is differentiable,
$\tfrac{d}{dD}R^{\mathrm{sc}}(D)=-\lambda^*(D)<0$.
\end{corollary}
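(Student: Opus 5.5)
The plan is the standard Lagrangian sensitivity argument for convex programs. Write the problem of Definition~\ref{def:rd} as minimizing $f(\theta)=I(X;Y)$ over the convex compact set $\mathcal{P}(\Z)^K$, subject to $g(\theta)=\sum_x p_X(x)W_1(Q_x,P_x)-D\le 0$. For the Lagrangian
\[
  L(\theta,\lambda)=I(X;Y)+\lambda\Bigl(\sum_x p_X(x)W_1(Q_x,P_x)-D\Bigr),
\]
the optimal dual multiplier $\lambda^*(D)\ge 0$ is a maximizer of the dual function $\inf_\theta L(\theta,\lambda)$. The first step is to establish strong duality and attainment of the dual optimum. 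For $D>0$, Slater's condition holds because the undefended strategy $Q_x=P_x$ has cost $0<D$. The objective and constraint are convex, as already used in the proof of Theorem~\ref{thm:properties}(ii). Since everything is finite-dimensional, convex duality then gives zero duality gap and a finite $\lambda^*(D)\ge 0$ with
\[
  R^{\mathrm{sc}}(D)=\min_\theta\Bigl[I(X;Y)+\lambda^*(D)\bigl(C(\theta)-D\bigr)\Bigr],
  \qquad C(\theta)\triangleq\sum_x p_X(x)W_1(Q_x,P_x).
\]

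The second step is the subgradient inequality. Fix any $D'\ge 0$ and let $\theta'$ be optimal at level $D'$, so $C(\theta')\le D'$ and $I_{\theta'}(X;Y)=R^{\mathrm{sc}}(D')$. Plugging $\theta'$ into the Lagrangian identity above gives
\[
  R^{\mathrm{sc}}(D)\le R^{\mathrm{sc}}(D')+\lambda^*(D)\bigl(C(\theta')-D\bigr)\le R^{\mathrm{sc}}(D')+\lambda^*(D)(D'-D).
\]
The last inequality uses $\lambda^*\ge 0$. Rearranged, this reads $R^{\mathrm{sc}}(D')\ge R^{\mathrm{sc}}(D)-\lambda^*(D)(D'-D)$ for all $D'$, which is precisely $-\lambda^*(D)\in\partial R^{\mathrm{sc}}(D)$. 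This yields the interpretation of $\lambda^*$ as the marginal leakage reduction per unit cost.

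The third step treats points where $R^{\mathrm{sc}}$ is differentiable. There the subdifferential is the singleton $\{(R^{\mathrm{sc}})'(D)\}$, so the derivative equals $-\lambda^*(D)$. It remains to show strict negativity, i.e.\ $\lambda^*(D)>0$ for $D<D_{\max}$, and this is where I expect the real content to lie.

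I would argue by contradiction. Suppose $\lambda^*(D)=0$ for some $D<D_{\max}$. Then the subgradient inequality gives $R^{\mathrm{sc}}(D')\ge R^{\mathrm{sc}}(D)$ for all $D'$. Combined with monotonicity (Theorem~\ref{thm:properties}(i)), $R^{\mathrm{sc}}$ is constant on $[D,\infty)$. But Theorem~\ref{thm:properties}(iv) gives $R^{\mathrm{sc}}(D)>0$ for $D<D_{\max}$ and $R^{\mathrm{sc}}(D_{\max})=0$, a contradiction. Hence $\lambda^*(D)>0$ and the derivative is strictly negative.

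A minor point to check is that "the optimal dual multiplier" may be non-unique at kinks. The statement only claims membership in the subdifferential, so any dual optimizer works, and the same positivity argument applies to each one.
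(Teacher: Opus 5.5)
Your proof is correct. The paper gives no proof of this corollary. It is stated as a consequence of the convexity in Theorem~\ref{thm:properties} plus standard Lagrangian sensitivity, and the only supporting argument for $\lambda^*>0$ comes later, in the ``constraint activation'' step of Theorem~\ref{thm:kkt}. Your first two steps supply exactly the sensitivity argument the paper leaves implicit.

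Where you genuinely differ is strict negativity. The paper argues on the primal optimizer: if $\lambda^*=0$, stationarity forces $Q_x^*\propto\bar{Q}^*$, hence $I(X;Y)=0$, hence cost at least $D_{\max}>D$. You argue on the value function: with $\lambda^*=0$, the subgradient inequality at $D'=D_{\max}$ gives $0=R^{\mathrm{sc}}(D_{\max})\ge R^{\mathrm{sc}}(D)>0$. Both rest on the same fact, Theorem~\ref{thm:properties}(iv): zero leakage forces a common output law, so the cost is at least $D_{\max}$.

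Your version buys two things:
\begin{itemize}
\item it makes the corollary independent of Theorem~\ref{thm:kkt};
\item it avoids the paper's passage from stationarity on the support $\mathcal{S}_x$ to $Q_x^*=\bar{Q}^*$ on all of $\Z$, which needs an extra argument about points where $Q_x^*(y)=0$ but $\bar{Q}^*(y)>0$.
\end{itemize}
In exchange, the paper's route shows directly that every optimizer makes the constraint active; you recover this from $\lambda^*>0$ via complementary slackness.

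A minor simplification: monotonicity is not needed in your contradiction, since taking $D'=D_{\max}$ in the subgradient inequality already suffices.
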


\subsection{Optimal Defense: Exponential-Tilting Structure}

\begin{theorem}[KKT conditions and exponential tilting]
\label{thm:kkt}
Let $D\in(0,D_{\max})$ and let $\{Q_x^*\}$ be optimal for~\eqref{eq:rsc}
with $\bar{Q}^* = \sum_x p_X(x)Q_x^*$.  Then there exists a dual multiplier
$\lambda^*>0$ and, for each $x$, a subgradient
$\phi_x^*\in\partial W_1(\cdot,P_x)|_{Q_x^*}$ (the Kantorovich potential)
such that on the support $\mathcal{S}_x\triangleq\{y:Q_x^*(y)>0\}$:
\begin{equation}
  \log Q_x^*(y) - \log\bar{Q}^*(y) + \lambda^*\phi_x^*(y) + c_x = 0,
\end{equation}
equivalently,
\begin{equation}\label{eq:exp-tilt}
  Q_x^*(y) \propto \bar{Q}^*(y)\,\exp\bigl(-\lambda^*\phi_x^*(y)\bigr),
  \quad y\in\mathcal{S}_x.
\end{equation}
The constraint is active: $\sum_x p_X(x)W_1(Q_x^*,P_x)=D$.
\end{theorem}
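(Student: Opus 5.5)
The plan is to treat~\eqref{eq:rsc} as a finite-dimensional convex program over the product of simplices $\prod_x\mathcal{P}(\Z)$. We first derive a Lagrangian characterization via Slater's condition, then read off stationarity on supports. Since $D\in(0,D_{\max})$, Slater holds: the point $Q_x=P_x$ for all $x$ has cost $0<D$, so it is strictly feasible. Both $I(X;Y)$ and $\sum_x p_X(x)W_1(Q_x,P_x)$ are convex and finite on the (polyhedral) domain, so strong duality holds. Hence there is $\lambda^*\ge 0$ such that $\{Q_x^*\}$ minimizes
\[
  L(\{Q_x\}) = I(X;Y) + \lambda^*\Bigl(\sum_x p_X(x)W_1(Q_x,P_x) - D\Bigr)
\]
over the product of simplices, with complementary slackness.

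\textbf{Positivity of $\lambda^*$ and activity of the constraint.} We argue by contradiction. If $\lambda^*=0$, then $\{Q_x^*\}$ would minimize $I(X;Y)$ unconstrained, so $R^{\mathrm{sc}}(D)=0$. This contradicts Theorem~\ref{thm:properties}(iv), which gives $R^{\mathrm{sc}}(D)>0$ for $D<D_{\max}$. Hence $\lambda^*>0$, and complementary slackness forces $\sum_x p_X(x)W_1(Q_x^*,P_x)=D$. Alternatively, activity follows from strict monotonicity: a convex function that is positive on $[0,D_{\max})$ and vanishes at $D_{\max}$ is strictly decreasing there, so slack in the constraint would contradict optimality.

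\textbf{Stationarity.} We write the Lagrangian in terms of the variables $Q_x(y)$, with equality multipliers for $\sum_y Q_x(y)=1$ and nonnegativity multipliers. The mutual information $I=\sum_x p_x\sum_y Q_x(y)\log\bigl(Q_x(y)/\bar Q(y)\bigr)$ is differentiable at points where $Q_x(y)>0$. Using the standard identity that the $\bar Q$-dependence contributes only a constant, its partial derivative is
\[
  \frac{\partial I}{\partial Q_x(y)} = p_x\Bigl(\log\frac{Q_x(y)}{\bar Q(y)}\Bigr),
\]
up to an additive constant $p_x$ that is absorbed later. The map $W_1(\cdot,P_x)$ is the value of a linear program in its first argument. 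By LP duality, $W_1(Q,P_x)=\max_{\phi\in\mathrm{Lip}_1}\bigl(\sum_y \phi(y)Q(y)-\sum_z\phi(z)P_x(z)\bigr)$, so it is convex and piecewise linear. Its subdifferential at $Q_x^*$ is the convex hull of the optimal Kantorovich potentials $\phi$, taken modulo constants, which is harmless on the simplex. The convex subgradient optimality condition $0\in\partial L+N_{\text{simplex}}$, applied via the sum rule (valid because $I$ is finite and continuous on the domain), produces for each $x$ a potential $\phi_x^*\in\partial W_1(\cdot,P_x)|_{Q_x^*}$ and a scalar $\nu_x$ with
\[
  p_x\log\frac{Q_x^*(y)}{\bar Q^*(y)} + \lambda^* p_x\phi_x^*(y) + \nu_x \ge 0,
\]
with equality on $\mathcal{S}_x$. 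Dividing by $p_x>0$ and setting $c_x=\nu_x/p_x$ gives the stated equation on $\mathcal{S}_x$. Exponentiating gives~\eqref{eq:exp-tilt}, with $c_x$ fixed by normalization.

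\textbf{Main obstacle.} The delicate step is the nonsmooth stationarity on the boundary of the simplex. Here $\log Q_x(y)$ blows up as $Q_x(y)\to 0$, so $I$ is not differentiable there, and the sum rule must be applied with care. I would first handle this by restricting to the face $\prod_x\{Q:\mathrm{supp}\,Q\subseteq\mathcal{S}_x\}$. On that face $Q_x^*$ is a relative-interior point, so $I$ is differentiable and the subgradient sum rule for the convex restriction applies directly. This yields equality only on $\mathcal{S}_x$, which is all the theorem claims. One point needs checking: $\bar Q^*(y)>0$ on $\mathcal{S}_x$, which holds trivially since $\bar Q^*(y)\ge p_xQ_x^*(y)>0$. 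Finally, the subgradient of $W_1$ restricted to the face is the restriction of a full Kantorovich potential, which follows from the LP dual representation above.
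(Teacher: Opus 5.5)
Your proof is correct and follows the paper's route: Slater point $Q_x=P_x$, strong duality, Lagrangian stationarity with an optimal Kantorovich potential as the subgradient of $W_1(\cdot,P_x)$, and activation by contradiction because zero leakage forces cost $\ge D_{\max}$. Your two refinements tighten the paper's argument. Getting $I(X;Y)=0$ from the saddle-point property when $\lambda^*=0$ avoids the paper's step from $Q_x^*\propto\bar Q^*$ on $\mathcal S_x$ to $Q_x^*=\bar Q^*$, which tacitly requires all supports to coincide. Restricting to the support face handles the non-differentiability of $I$ at the simplex boundary, which the paper's formal differentiation glosses over.
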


\begin{proof}
\textit{Existence.}
$\prod_x\mathcal{P}(\Z)$ is compact in the finite-dimensional topology,
$I(X;Y)$ and $W_1$ are continuous, so the minimum is attained
(Weierstrass theorem).

\textit{Strong duality.}
Taking $Q_x=P_x$ gives $D(\theta)=0<D$, so Slater's condition holds.
For a finite-dimensional convex program with a Slater point, strong duality
holds and the KKT conditions are necessary and sufficient for optimality.

\textit{Lagrangian and stationarity.}
Introduce multiplier $\lambda\ge 0$ (inequality constraint),
$\nu_x\in\mathbb{R}$ (normalization $\sum_y Q_x(y)=1$),
and $\eta_x(y)\ge 0$ (non-negativity $Q_x(y)\ge 0$).
Write $I(X;Y)=\sum_x p_x\sum_y Q_x(y)\log Q_x(y)-\sum_y\bar{Q}(y)\log\bar{Q}(y)$
where $\bar{Q}(y)=\sum_x p_x Q_x(y)$.  For $y\in\mathcal{S}_x$ (where
$Q_x^*(y)>0$, so $\eta_x(y)=0$ by complementary slackness),
differentiating with respect to $Q_x(y)$ and using
$\partial\bar{Q}(y)/\partial Q_x(y)=p_x$ gives:
\[
  \frac{\partial I}{\partial Q_x(y)} = p_x\bigl(\log Q_x(y)-\log\bar{Q}(y)\bigr).
\]
For the $W_1$ term, since $W_1(\cdot,P_x)$ is a convex piecewise-linear
function (LP value), we select a subgradient $\phi_x^*$ (the optimal
Kantorovich dual potential).  The stationarity condition
$\nabla_{Q_x(y)}\mathcal{L}=0$ becomes:
\[
  p_x\bigl(\log Q_x^*(y)-\log\bar{Q}^*(y)+\lambda^*\phi_x^*(y)\bigr)+\nu_x=0.
\]
Setting $c_x=\nu_x/p_x$ and exponentiating yields~\eqref{eq:exp-tilt}.

\textit{Constraint activation.}
Suppose $D<D_{\max}$ but the constraint is inactive.  Then $\lambda^*=0$
by complementary slackness, and the stationarity condition reduces to
$\log Q_x^*(y)=\log\bar{Q}^*(y)-c_x$, i.e.\ $Q_x^*\propto\bar{Q}^*$,
which after normalization gives $Q_x^*=\bar{Q}^*$ for all $x$.
This implies $I(X;Y)=0$, so the cost of this solution satisfies
$\sum_x p_x W_1(\bar{Q}^*,P_x)\ge D_{\max}>D$,
contradicting feasibility.  Hence $\lambda^*>0$ and the constraint is active.
\end{proof}

Equation~\eqref{eq:exp-tilt} is the traffic-defense analog of the
Boltzmann distribution in rate-distortion
theory~\cite{cover2006elements,gallager1968information}: $\phi_x^*$ acts
as the per-class cost function and $\lambda^*$ is the marginal value
of defense resources.  A Blahut--Arimoto-style~\cite{blahut1972computation}
sweep over $\lambda$ traces the full $R^{\mathrm{sc}}(D)$ curve.

\begin{figure}[tb]
\centering
\begin{tikzpicture}[>=Stealth,scale=0.92]
  \draw[->] (-0.3,0) -- (5.2,0) node[right] {$D$ (defense cost)};
  \draw[->] (0,-0.3) -- (0,4.2) node[above] {$R$ (leakage rate)};
  \draw[dashed] (0,3.5) -- (4.0,3.5);
  \draw[dashed] (4.0,0) -- (4.0,3.5);
  \node[left] at (0,3.5) {$I(X;Z)$};
  \node[below] at (4.0,0) {$D_{\max}$};
  \node[below] at (0,0) {$0$};
  \draw[thick,color=copenblue,domain=0:4.0,samples=100]
    plot (\x, {3.5*(1-\x/4.0)^1.6});
  \filldraw[copenblue] (0,3.5) circle (2pt);
  \filldraw[copenblue] (4.0,0) circle (2pt);
  \draw[dashed,gray] (2.0,0) -- (2.0,1.46) -- (0,1.46);
  \filldraw[gray] (2.0,1.46) circle (2pt)
    node[right,xshift=2pt] {\small working point};
  \node[color=copenblue] at (2.8,2.5) {$R^{\mathrm{sc}}(D)$};
  \draw[gray,dashed,thin] (4.0,0) -- (5.1,0);
  \node[gray,font=\small] at (4.75,0.3) {zero-leakage};
\end{tikzpicture}
\caption{Side-channel rate-distortion curve $R^{\mathrm{sc}}(D)$: the
exact Pareto frontier within $\Theta_{\mathrm{iid}}$.  Any defense in
the class operates at or above the curve.}
\label{fig:rd-curve}
\end{figure}

\subsection{Pareto Frontier and Critical Cost}

\begin{theorem}[Exact Pareto frontier within $\Theta_{\mathrm{iid}}$]
\label{thm:pareto}
\begin{enumerate}[label=(\roman*),topsep=2pt,itemsep=1pt]
  \item \textup{(Achievability)} For every $D\in[0,D_{\max}]$, there exists
    $\theta^*\in\Theta_{\mathrm{iid}}$ with $D(\theta^*)\le D$ and
    $R(\theta^*)=R^{\mathrm{sc}}(D)$.  For $D\in(0,D_{\max})$ the constraint
    is active.
  \item \textup{(Unimprovability)} For any $\theta\in\Theta_{\mathrm{iid}}$
    with $D(\theta)\le D$, we have $R(\theta)\ge R^{\mathrm{sc}}(D)$.
\end{enumerate}
\end{theorem}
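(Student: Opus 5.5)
Both parts reduce to Lemma~\ref{lem:single-letter}, which identifies the operational leakage rate with the single-letter objective, together with the attainment and activity facts already established.

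\textbf{Achievability.} Fix $D\in[0,D_{\max}]$. The optimization in Definition~\ref{def:rd} is over $\prod_x\mathcal{P}(\Z)$, which is compact. On this set the cost $\sum_x p_X(x)W_1(Q_x,P_x)$ is continuous, since it is the value of a finite LP that depends continuously on its marginals. Hence the feasible set $\mathcal{F}(D)$ is a closed subset of a compact set, and it is nonempty because it contains $\{P_x\}$. Since $I(X;Y)$ is continuous in $\{Q_x\}$, a minimizer $\{Q_x^*\}$ exists, exactly as in the existence step of Theorem~\ref{thm:kkt}. Let $\theta^*$ be the defense that draws $Y_i\mid X_i=x$ i.i.d.\ from $Q_x^*$. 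Then $\theta^*\in\Theta_{\mathrm{iid}}$ and $D(\theta^*)\le D$ by feasibility. By Lemma~\ref{lem:single-letter}, $R(\theta^*)=I(X;Y)$ under $p_X(x)Q_x^*(y)$, and this equals $R^{\mathrm{sc}}(D)$.

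At the endpoints no extra work is needed. For $D=0$ the minimizer is $\{P_x\}$. For $D=D_{\max}$ the minimizer is the constant choice $Q^\dagger$ from part~(v) of the proof of Theorem~\ref{thm:properties}. Activity of the constraint for $D\in(0,D_{\max})$ is quoted directly from the last clause of Theorem~\ref{thm:kkt}. As a backup, it also follows from Theorem~\ref{thm:properties} alone. Suppose $D(\theta^*)=D'<D$. Then $R^{\mathrm{sc}}(D')\le I(X;Y)=R^{\mathrm{sc}}(D)$, and monotonicity gives $R^{\mathrm{sc}}$ constant on $[D',D]$. But $R^{\mathrm{sc}}$ is convex, strictly positive on $[0,D_{\max})$, and zero at $D_{\max}$, so it is strictly decreasing on $[0,D_{\max}]$ (the chord argument), a contradiction.

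\textbf{Unimprovability.} Take any $\theta=\{Q_x\}\in\Theta_{\mathrm{iid}}$ with $D(\theta)\le D$. Then $\{Q_x\}\in\mathcal{F}(D)$. Lemma~\ref{lem:single-letter} gives $R(\theta)=I(X;Y)$ under $p_XQ_x$, and this is at least the minimum over $\mathcal{F}(D)$, namely $R^{\mathrm{sc}}(D)$. Combining the two parts, the graph $\{(D,R^{\mathrm{sc}}(D))\}$ is exactly the lower boundary of the achievable region $\{(D(\theta),R(\theta)):\theta\in\Theta_{\mathrm{iid}}\}$.

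\textbf{Main obstacle.} There is no deep step. The only point needing care is making sure the operational quantity $R(\theta)$ coincides with the single-letter objective for every member of $\Theta_{\mathrm{iid}}$, and Lemma~\ref{lem:single-letter} supplies exactly this. The secondary subtlety is the strict-decrease argument for activity, which I would include only in case the reader doubts the complementary-slackness step in Theorem~\ref{thm:kkt}.
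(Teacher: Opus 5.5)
Your proof is correct and follows the paper's route. Attainment comes from compactness and continuity, activity is quoted from Theorem~\ref{thm:kkt}, and unimprovability follows from feasibility; your explicit use of Lemma~\ref{lem:single-letter} makes precise the identification $R(\theta)=I(X;Y)$ that the paper's ``by definition'' step leaves implicit. Your backup argument for activity is also valid: $R^{\mathrm{sc}}$ is convex, positive on $[0,D_{\max})$ and zero at $D_{\max}$, hence strictly decreasing. It has the advantage of depending only on Theorem~\ref{thm:properties} rather than on the KKT stationarity computation, which the paper states only on the supports $\mathcal{S}_x$.
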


\begin{proof}
\textit{(i) Achievability.}
The feasible set is compact and $I(X;Y)$ is continuous, so the infimum
in Definition~\ref{def:rd} is attained for every $D\ge 0$.
Constraint activation for $D\in(0,D_{\max})$ follows from Theorem~\ref{thm:kkt}.

\textit{(ii) Unimprovability.}
$R^{\mathrm{sc}}(D)$ is defined as the infimum of $R(\theta)$ over all
$\theta\in\Theta_{\mathrm{iid}}$ with $D(\theta)\le D$.
By definition, $R(\theta)\ge R^{\mathrm{sc}}(D)$ for every such $\theta$.
\end{proof}

These two properties jointly say that $R^{\mathrm{sc}}(D)$ is a tight lower
bound: no i.i.d.\ defense can do better, and the bound is achieved by
a concrete distribution family $\{Q_x^*\}$.

\begin{proposition}[Critical cost for binary equal-prior]
\label{prop:dmax}
Let $K=2$ and $p_X(0)=p_X(1)=\tfrac{1}{2}$.  Then $D_{\max}\ge\tfrac{1}{2}W_1(P_0,P_1)$.
If a $W_1$-geodesic midpoint $Q_{1/2}$ exists (i.e.,\
$W_1(P_0,Q_{1/2})=W_1(Q_{1/2},P_1)=\tfrac{1}{2}W_1(P_0,P_1)$), then
\begin{equation}\label{eq:dmax}
  D_{\max} = \tfrac{1}{2}W_1(P_0,P_1) = \frac{\Delta_{\mathrm{app}}^{(L)}}{2L},
\end{equation}
where $\Delta_{\mathrm{app}}^{(L)}=\sup_{h\in\cH_L}|\E_{P_0}h-\E_{P_1}h|$
and the equality follows from Kantorovich--Rubinstein duality.
\end{proposition}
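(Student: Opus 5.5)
The plan has three steps: a lower bound on $D_{\max}$ from the triangle inequality, a matching upper bound from the midpoint, and a rescaling via Kantorovich--Rubinstein duality.

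\emph{Lower bound.} With $K=2$ and equal priors, the definition in Theorem~\ref{thm:properties}(iv) reads
\[
D_{\max}=\inf_{Q\in\mathcal{P}(\Z)}\tfrac12\bigl(W_1(Q,P_0)+W_1(Q,P_1)\bigr).
\]
On the finite metric space $(\Z,d)$, $W_1$ is a metric on $\mathcal{P}(\Z)$. Symmetry is immediate, since the transpose of a coupling in $\Pi(Q,P)$ is a coupling in $\Pi(P,Q)$. The triangle inequality follows from the gluing lemma, which on a finite space is explicit: given optimal couplings $\gamma_1\in\Pi(P_0,Q)$ and $\gamma_2\in\Pi(Q,P_1)$, set
\[
\gamma(z,y,z')=\gamma_1(z,y)\,\gamma_2(y,z')/Q(y)\quad\text{on }Q(y)>0,
\]
take its $(z,z')$-marginal, and apply $d(z,z')\le d(z,y)+d(y,z')$. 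Hence for every $Q$,
\[
W_1(Q,P_0)+W_1(Q,P_1)\ge W_1(P_0,P_1).
\]
Taking the infimum over $Q$ gives $D_{\max}\ge\tfrac12 W_1(P_0,P_1)$. This part needs no extra hypothesis.

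\emph{Upper bound under the midpoint hypothesis.} If $Q_{1/2}$ satisfies $W_1(P_0,Q_{1/2})=W_1(Q_{1/2},P_1)=\tfrac12W_1(P_0,P_1)$, then plugging $Q=Q_{1/2}$ into the infimum gives
\[
D_{\max}\le\tfrac12\bigl(\tfrac12W_1+\tfrac12W_1\bigr)=\tfrac12 W_1(P_0,P_1).
\]
This matches the lower bound. In fact any point on the segment works, because the objective only requires the two distances to sum to $W_1(P_0,P_1)$. A natural candidate is the mixture $\tfrac12(P_0+P_1)$. By convexity of $W_1(\cdot,P)$, already used in Theorem~\ref{thm:properties}(ii), its distances to $P_0$ and $P_1$ are each at most $\tfrac12W_1(P_0,P_1)$, so with the lower bound they sum to exactly $W_1(P_0,P_1)$. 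This suggests the midpoint hypothesis is automatically satisfied, but the plan does not rely on that observation.

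\emph{Duality form.} Kantorovich--Rubinstein duality (Villani) gives $W_1(P_0,P_1)=\sup_{h\in\cH_1}|\E_{P_0}h-\E_{P_1}h|$. The class $\cH_L$ of $L$-Lipschitz functions is exactly $L\cdot\cH_1$, so by homogeneity $\Delta^{(L)}_{\mathrm{app}}=L\,W_1(P_0,P_1)$, which yields the second equality in~\eqref{eq:dmax}. The main obstacle is stating the metric property of $W_1$ cleanly, namely the gluing construction with care at points where $Q(y)=0$. Everything else is direct substitution into the definition of $D_{\max}$.
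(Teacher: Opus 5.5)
Your proposal is correct and follows the paper's proof: the triangle-inequality lower bound, substituting the midpoint into the infimum for the matching upper bound, and the homogeneity $\mathcal{H}_L=L\,\mathcal{H}_1$ for the Kantorovich--Rubinstein form. Your side remark also holds: convexity of $W_1(\cdot,P_x)$ together with the triangle inequality shows that $\tfrac12(P_0+P_1)$ is always a $W_1$-geodesic midpoint, so the existence hypothesis is automatically satisfied and $D_{\max}=\tfrac12W_1(P_0,P_1)$ holds unconditionally.
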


\begin{proof}
\textit{Lower bound.}
By the $W_1$ triangle inequality, for any $Q$:
\[
  \tfrac{1}{2}\bigl(W_1(Q,P_0)+W_1(Q,P_1)\bigr)
  \ge \tfrac{1}{2}W_1(P_0,P_1).
\]
Since $\mathcal{P}(\Z)$ is compact and $Q\mapsto\sum_x p_x W_1(Q,P_x)$
is continuous, the infimum over $Q$ is attained and
$D_{\max}\ge\tfrac{1}{2}W_1(P_0,P_1)$.

\textit{Upper bound (geodesic midpoint).}
If $Q_{1/2}$ exists, choose $Q_0=Q_1\equiv Q_{1/2}$ as the common defended
distribution for both classes.  Then
\[
  D_{\max}\le\tfrac{1}{2}\bigl(W_1(P_0,Q_{1/2})+W_1(P_1,Q_{1/2})\bigr)
  =\tfrac{1}{2}W_1(P_0,P_1).
\]
Combining both bounds gives $D_{\max}=\tfrac{1}{2}W_1(P_0,P_1)$.

\textit{Kantorovich--Rubinstein link.}
On finite $\Z$ with metric $d$, K-R duality~\cite{villani2009optimal} states
$W_1(P_0,P_1)=\sup_{h\in\cH_1}|\E_{P_0}h-\E_{P_1}h|$.  For general
Lipschitz constant $L$, rescaling gives
$\Delta_{\mathrm{app}}^{(L)}=L\,W_1(P_0,P_1)$, hence
$D_{\max}=\Delta_{\mathrm{app}}^{(L)}/(2L)$.
\end{proof}

Equation~\eqref{eq:dmax} ties $D_{\max}$ to a measurable quantity: the more
distinguishable the two applications, the higher the cost threshold for
zero leakage and the wider the useful budget range $[0,D_{\max})$.

\section{Empirical Evaluation}
\label{sec:exp}

\subsection{Experimental Setup}

\textbf{Dataset.}  We use the public website fingerprinting dataset of
Deng~et~al.~\cite{deng2024earlystage}, which is built on 95~monitored
websites collected by Sirinam~et~al.~\cite{sirinam2018deep} on Tor
(1,000~traces per site).  Four conditions are included: the undefended baseline (CW)
and three defenses: Front~\cite{gong2020zero}, WTF-PAD~\cite{juarez2016wtfpad},
and TrafficSliver~\cite{delacadena2020trafficsliver}.  \emph{Front} injects
dummy packets drawn from a Rayleigh distribution at zero latency cost.
\emph{WTF-PAD} uses an adaptive two-state automaton to inject dummy packets in
burst gaps.  \emph{TrafficSliver} splits Tor traffic across multiple guard
nodes (BWR strategy), so an adversary observing a single node sees only a
fraction of packets.

\textbf{Feature extraction.}  Raw traces are stored as signed-timestamp
sequences.  We extract inter-packet delays by differencing consecutive
non-zero timestamps (direction-agnostic) and quantize to $L=50$ uniform
histogram bins, obtaining empirical distributions $\widehat{P}_x$ and
defended distributions $\widehat{Q}_x^m$ per defense $m$.

\textbf{Class-pair selection.}  On the undefended CW data, we rank all
$\binom{95}{2}$ pairs by $W_1(\widehat{P}_0,\widehat{P}_1)$ and select the
top five: pairs (21,90), (4,21), (10,90), (83,90), (2,90), with
$W_1\in[0.039,0.042]$.  For each pair we assume equal prior
$p_X(0)=p_X(1)=\tfrac{1}{2}$ and compute the theoretical curve
$R^{\mathrm{sc}}(D)$ via Lagrange-multiplier sweep (SLSQP,
$N_\lambda=60$ log-uniform points).

\textbf{Defense evaluation.}  For each (class pair, defense) combination
the working-point coordinates are
\[
  \widehat{D}^m = \tfrac{1}{2}\bigl[W_1(\widehat{Q}_0^m,\widehat{P}_0)
    + W_1(\widehat{Q}_1^m,\widehat{P}_1)\bigr],\qquad
  \widehat{R}^m = I(X;\widehat{Y}^m).
\]
The \emph{suboptimality gap} is $\Delta_{\mathrm{gap}}^m =
\widehat{R}^m - R^{\mathrm{sc}}(\widehat{D}^m)\ge 0$, and the
\emph{cost utilization ratio} is $\widehat{D}^m/D_{\max}$.
Confidence intervals are obtained via 200-round flow-level bootstrap
with hierarchical averaging across the five class pairs.

\subsection{Results}

Table~\ref{tab:results} summarizes the average suboptimality gap for each
defense, and Fig.~\ref{fig:exp2} shows the two-metric comparison across class
pairs.  All three defenses sit strictly above the theoretical curve
($\Delta_{\mathrm{gap}}>0$; 95\,\% confidence intervals do not include zero).
Pairwise bootstrap tests confirm that the gaps are mutually distinguishable.

\begin{table}[tb]
\centering
\caption{Suboptimality gaps of three defenses (five class pairs, 95\,\%
bootstrap CI).}
\label{tab:results}
\begin{tabular}{lccc}
\toprule
Defense & $\overline{\Delta}_{\mathrm{gap}}$ (bits) & 95\,\% CI (bits)
        & $\widehat{D}/D_{\max}$ \\
\midrule
Front         & 0.028 & [0.026,\;0.030] & 89.2\,\% \\
WTF-PAD       & 0.034 & [0.033,\;0.036] & 83.6\,\% \\
TrafficSliver & 0.124 & [0.117,\;0.130] & 46.4\,\% \\
\bottomrule
\end{tabular}
\end{table}

\begin{figure}[tb]
\centering
\includegraphics[width=0.82\linewidth]{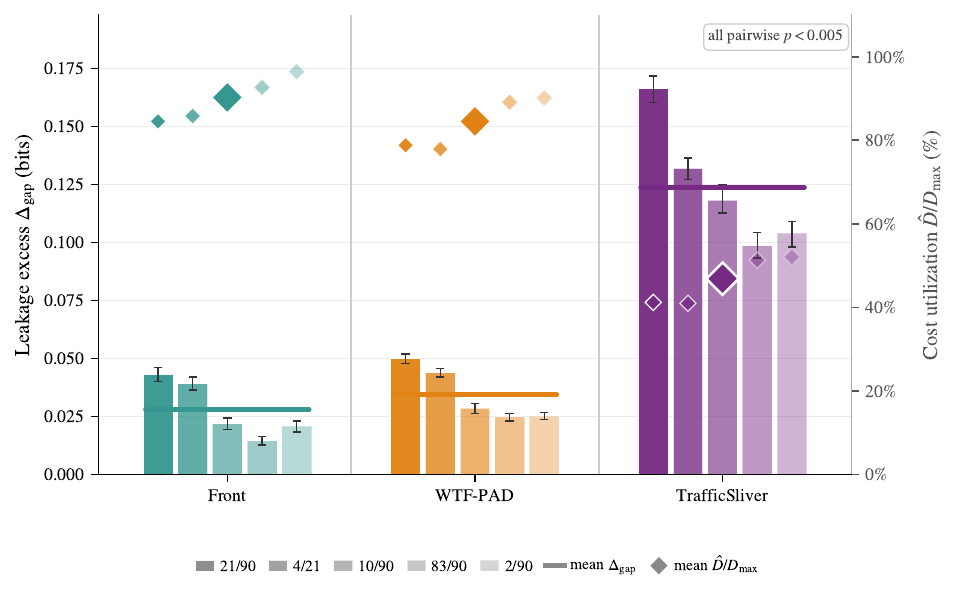}
\caption{Cost utilization vs.\ suboptimality gap for three defenses across
five class pairs (error bars: 95\,\% bootstrap CI).
Front and WTF-PAD consume $>$83\,\% of the available budget yet remain
suboptimal; TrafficSliver uses only 46\,\% of the budget and incurs the
largest gap, revealing inefficiency of the splitting strategy under this
metric.}
\label{fig:exp2}
\end{figure}

\textbf{Discussion.}  Front and WTF-PAD both approach the budget ceiling
($>$83\,\% utilization) but still leave measurable gaps.  This indicates that
the \emph{direction} of feature-space perturbation, not merely its magnitude,
matters---consistent with the exponential-tilting structure in
Theorem~\ref{thm:kkt}: optimal defenses must align perturbations with the
Kantorovich potential $\phi_x^*$, whereas Front and WTF-PAD use fixed
injection policies independent of the per-class cost gradient.
TrafficSliver's large gap ($0.124$ bits) at low cost utilization ($46\,\%$)
suggests that packet splitting reduces the \emph{observed} packet count but
does not reshape the timing distribution toward the optimal transport target.

\section{Discussion and Conclusion}
\label{sec:discussion}

\textbf{Scope of $R^{\mathrm{sc}}(D)$.}  The Pareto optimality result is exact
within $\Theta_{\mathrm{iid}}$.  For the broader class of \emph{memory-based}
defenses (e.g., those employing flow-level shared randomness $\kappa$),
a Jensen argument shows $\tfrac{1}{n}I(X^n;Y^n)\to I(X;Y|\kappa)
\ge R^{\mathrm{sc}}(D)$ asymptotically under per-packet marginal cost,
so $R^{\mathrm{sc}}(D)$ remains a valid lower bound.  Whether strictly
tighter bounds exist for adaptive causal defenses---requiring a multi-letter
rate-distortion formulation---is an open problem.

\textbf{Engineering gap.}  $D_{\max}$ is a mathematical threshold; reaching it
requires transporting all application distributions to a common target
$Q^\dagger$.  TCP acknowledgment constraints, buffer limits, and uplink/downlink
asymmetry all prevent this in practice, so real deployments operate in
$[0,D_{\max})$.  Quantifying the additional implementation penalty relative to
the distribution-layer baseline is a natural next step.

\textbf{Conclusion.}  We have established $R^{\mathrm{sc}}(D)$, the first
provable, computable baseline for the leakage-cost tradeoff in encrypted
traffic defense.  The curve is the exact Pareto frontier within
$\Theta_{\mathrm{iid}}$; the optimal defense follows an
exponential-tilting law; and $D_{\max}$ is tied to the Wasserstein distance
between application distributions.  Applied to three real-world defenses,
the framework quantifies suboptimality gaps---a principled benchmark against
which any future defense can be measured.

\begin{credits}
\subsubsection{\ackname}
This work was supported by the National Natural Science Foundation of China
Joint Fund Integration Project (No.~U2436601).

\subsubsection{\discintname}
The authors have no competing interests to declare.
\end{credits}

\bibliographystyle{splncs04}
\bibliography{reference}

\end{document}